\renewenvironment{abstract}
{\small\vspace{-1em}
\begin{center}
\bfseries\abstractname\vspace{-.5em}\vspace{0pt}
\end{center}
\list{}{
\setlength{\leftmargin}{0.6in}%
\setlength{\rightmargin}{\leftmargin}}%
\item\relax}
{\endlist}
\declaretheorem[name=Theorem, numberwithin=section]{theorem}
\declaretheorem[name=Lemma, sibling=theorem]{lemma}
\declaretheorem[name=Corollary, sibling=theorem]{corollary}
\declaretheorem[name=Problem, sibling=theorem]{problem}
\def\cqedsymbol{\ifmmode$\lrcorner$\else{\unskip\nobreak\hfil
\penalty50\hskip1em\null\nobreak\hfil$\lrcorner$
\parfillskip=0pt\finalhyphendemerits=0\endgraf}\fi}
\newcommand{\NN}{\mathbb{N}} 
\def\O{\mathcal{O}} 
\tikzstyle{vertex}=[circle, draw, fill=black!50,
\let\le\leqslant
\let\ge\geqslant
\let\leq\leqslant
\let\geq\geqslant
\renewcommand*{\backref}[1]{}
\renewcommand*{\backrefalt}[4]{
	\ifcase #1 Not cited.%
	\or $\uparrow$#2%
	\else $\uparrow$#2%
	\fi%
}
\DeclarePairedDelimiter{\abs}{\lvert}{\rvert}
\DeclarePairedDelimiter{\floor}{\lfloor}{\rfloor}
\let\emptyset\varnothing
\newcommand{\dmax}{d_{\max}}
\newcommand\emailfont{\sffamily}
\title{Improved exploration of temporal graphs}
\author[1]{Paul Bastide\thanks{Research supported by ERC Advanced Grant 883810.}}
\author[2]{Carla Groenland\thanks{Research supported by the Dutch Research Council (NWO, VI.Veni.232.073).}}
\author[3]{Lukas Michel}
\author[4]{Clément Rambaud}
\affil[1,3]{
Mathematical Institute, University of Oxford, UK. Emails: {\emailfont{paul.bastide@ens-rennes.fr}, \emailfont{lukas.michel@maths.ox.ac.uk}}.}
\affil[2]{
Delft Institute of Applied Mathematics, TU Delft, the Netherlands. Email: 
{\emailfont{c.e.groenland@tudelft.nl}}.}
\affil[4]{Université Côte d'Azur, CNRS, Inria, I3S, Sophia-Antipolis, France. Email: 
{\emailfont{clement.rambaud@inria.fr}}.}
\date{}
\begin{document}

\maketitle

\begin{abstract}
    A temporal graph $G$ is a sequence $(G_t)_{t \in I}$ of graphs on the same vertex set of size $n$.
    The \emph{temporal exploration problem} asks for the length of the shortest sequence of vertices that starts at a given vertex, visits every vertex, and at each time step $t$ either stays at the current vertex or moves to an adjacent vertex in $G_t$.
    Bounds on the length of a shortest temporal exploration have been investigated extensively. Perhaps the most fundamental case is when each graph $G_t$ is connected and has bounded maximum degree. In this setting, Erlebach, Kammer, Luo, Sajenko, and Spooner [ICALP 2019] showed that there exists an exploration of $G$ in $\mathcal{O}(n^{7/4})$ time steps. We significantly improve this bound by showing that $\mathcal{O}(n^{3/2} \sqrt{\log n})$ time steps suffice. 

    In fact, we deduce this result from a much more general statement.
    Let the \emph{average temporal maximum degree} $D$ of $G$ be the average of $\max_{t \in I} d_{G_t}(v)$ over all vertices $v \in V(G)$, where $d_{G_t}(v)$ denotes the degree of $v$ in $G_t$.
    If each graph $G_t$ is connected, we show that there exists an exploration of $G$ in $\mathcal{O}(n^{3/2} \sqrt{D \log n})$ time steps.
    In particular, this gives the first subquadratic upper bound when the underlying graph has bounded average degree. As a special case, this also improves the previous best bounds when the underlying graph is planar or has bounded treewidth and provides a unified approach for all of these settings. 
    Our bound is subquadratic already when $D=o(n/\log n)$.
\end{abstract}

\section{Introduction}

A great variety of topics, ranging from communication networks to the design of power grids or the study of metabolism, can be represented via graphs \cite{newman2018networks}. For plenty of these  applications, the underlying graph evolves with time, with some connections appearing or disappearing. The dynamics of these changes is fundamental to the study of such problems and is not captured by static networks. To address this issue, temporal graphs were introduced, as they provide a framework for modelling networks that evolve over time.
There has been recent interest in studying complexity-theoretic problems on temporal graphs~\cite{complexityEulerian23,complexityseparator20,complexityvertexcover22,complexityEulerian21,complexitycoloring22}, specifically related to  reachability questions~\cite{complexityteamofagents14,complexityexplorationstars21,complexityexplorationkernelizing23,pathwidthtwostillhard19,ErlebachHoffmannKammer21,complexitynonstrict20,complexityErlebachSpooner23,MichailSpirakis16}, as well as in proving combinatorial bounds for such questions~\cite{AdamsonGusevMalyshevZamaraev22,BaguleyGobelKlodtSkretasSylvesterZamaraevRST25,specialcases25minedges,DogeasErlebachKammerMeintrupMoses24,ErlebachHoffmannKammer21,ErlebachSpooner18,specialcases22,tempexpcacti14, specialcases13,taghian2020exploring}.
We refer the readers to \cite{casteigts2012time,michail2016introduction} for surveys on temporal graphs.

Formally, a temporal graph $G$ over an interval $I \subseteq \NN$ is a sequence of graphs $(G_t)_{t \in I}$ that all have the same vertex set $V$, but possibly different edge sets. 
We call $G_t$ the \emph{snapshot} at time step $t$, we say that $(V,\bigcup_{t\in I} E(G_t))$ is the \textit{underlying graph} of $G$, and we say that $G$ is \emph{always-connected} if $G_t$ is connected for all $t$. 

In the \textit{temporal exploration problem} (TEXP), introduced by Michail and Spirakis~\cite{MichailSpirakis16}, an agent starts at time step one at a given vertex of a temporal graph $G$ and wants to visit all vertices of $V$ as quickly as possible. At each time step $t$, the agent can either stay at the current vertex or use an edge of the snapshot $G_t$ to move to an adjacent vertex. We assume that the agent knows the sequence of graphs in advance.

Deciding if a temporal exploration exists is NP-hard in general~\cite{MichailSpirakis16}. Even when the graph is always-connected, approximating the shortest length of a temporal exploration up to a factor of $\mathcal{O}(n^{1-\varepsilon})$ is NP-hard for any $\varepsilon>0$~\cite{ErlebachHoffmannKammer21}. Moreover, the problem of deciding if an an exploration of a given length exists remains NP-hard when the underlying graph has pathwidth 2~\cite{pathwidthtwostillhard19}.

Due to the difficulty of approximating a shortest exploration in an always-connected temporal graph, there has been significant interest in determining upper and lower bounds on the length of such an exploration in terms of the number of vertices $n$.
The fact that the temporal graph is always-connected ensures that any vertex can reach any other vertex within $n -1$ time steps, and therefore the length of a shortest exploration is always between $n-1$ and $n^2$.
There are temporal graphs that require $\Omega(n^2)$ time steps~\cite{ErlebachHoffmannKammer21ICALP2015version}, but the snapshots $G_t$ in these constructions are stars that contain vertices of degree $\Omega(n)$, and the underlying graph is very dense.
It is natural to ask whether restricting either the maximum degree of each snapshot $G_t$ or the density of the underlying graph can guarantee short explorations.
The star construction can be extended to show that for every $d$, it is possible that each snapshot $G_t$ has maximum degree $d$, yet any shortest exploration takes at least $\Omega(dn)$ time steps~\cite{ErlebachHoffmannKammer21ICALP2015version}.
There are also examples where the underlying graph is planar of maximum degree 3 and all $G_t$ are paths, but a shortest exploration can still have length $\Omega(n\log n)$~\cite{AdamsonGusevMalyshevZamaraev22, ErlebachHoffmannKammer21ICALP2015version}. Inapproximability results persist even when the underlying graph has bounded maximum degree. For example, for all $\varepsilon,\delta > 0$, if the underlying graph has maximum degree at most $d$ with $d = \Omega(n^{\delta})$, it is still NP-hard to approximate the length of a shortest exploration within a factor of $\mathcal{O}(d^{1-\varepsilon})$~\cite{ErlebachHoffmannKammer21}.

Nonetheless, in the case where each snapshot has bounded maximum degree, Erlebach and Spooner~\cite{ErlebachSpooner18} gave the first subquadratic upper bound for the shortest exploration. They showed that $\mathcal{O}((n^2 \log d)/\log n)$ time steps suffice if each snapshot has maximum degree at most $d$. This was substantially improved by Erlebach, Kammer, Luo, Sajenko, and Spooner~\cite{ErlebachKammerLuoSajenkoSpooner19} who showed that $\mathcal{O}(n^{7/4})$ time steps suffice, provided that either the maximum degree of each snapshot is bounded or the temporal exploration is allowed to make two moves per time step.

In terms of restrictions on the underlying graph, Erlebach, Hoffmann, and Kammer~\cite{ErlebachHoffmannKammer21ICALP2015version,ErlebachHoffmannKammer21} showed that temporal graphs can be explored in $\mathcal{O}(k^{1.5}\, n^{1.5} \log n)$ time steps if the underlying graph has treewidth~$k$ and in $\mathcal{O}(n^{1.8}\log n)$ time steps if the underlying graph is planar.
Adamson, Gusev, Malyshev, and Zamaraev~\cite{AdamsonGusevMalyshevZamaraev22} improved these results, showing that $\mathcal{O}(k\, n^{1.5} \log n)$ time steps suffice for treewidth~$k$ and $\mathcal{O}(n^{1.75} \log n)$ time steps suffice for planar underlying graphs.
In addition, linear or near-linear bounds have been obtained when the underlying graph is a cactus~\cite{tempexpcacti14}, the $2\times n$-grid~\cite{ErlebachHoffmannKammer21}, a cycle with few chords~\cite{AdamsonGusevMalyshevZamaraev22,ErlebachHoffmannKammer21,taghian2020exploring}, or a graph with small orbit number~\cite{DogeasErlebachKammerMeintrupMoses24}.
Variations on the temporal exploration problem have also been studied, such as bounding the number of edges used during the exploration~\cite{specialcases25minedges} or finding a short temporal exploration when few edges are removed from the underlying graph in each snapshot~\cite{specialcases22} or when some edges are present for a longer time interval~\cite{specialcases13}.

As bounding the length of a shortest exploration in temporal graphs turned out to be quite difficult, this problem has also been studied for a random model by Baguley, G\"{o}bel, Klodt, Skretas, Sylvester, and Zamaraev~\cite{BaguleyGobelKlodtSkretasSylvesterZamaraevRST25}. Given a probability distribution $\mu$ on a collection of trees on $n$ vertices, they sample a random temporal graph by sampling a random tree $G_t \sim \mu$ independently at each time step. In this setting, they showed that there is a temporal exploration of length $\mathcal{O}(n^{3/2})$ with high probability and that this is optimal up to the multiplicative constant factor for a suitable distribution $\mu$ on stars. The authors of \cite{BaguleyGobelKlodtSkretasSylvesterZamaraevRST25} also highlighted the study of short explorations of temporal graphs with snapshots of bounded maximum degree as being \emph{``arguably the most fundamental case''} in the deterministic setting.

\paragraph{Our results.} We study the temporal exploration problem using a unified approach that applies both to temporal graphs whose snapshots have bounded maximum degree and to temporal graphs whose underlying graph has small edge density. This allows us to simultaneously improve several of the upper bounds mentioned above~\cite{AdamsonGusevMalyshevZamaraev22,ErlebachHoffmannKammer21ICALP2015version,ErlebachHoffmannKammer21,ErlebachKammerLuoSajenkoSpooner19,ErlebachSpooner18}.
To state our general result, for any vertex $v \in V(G)$, define $\dmax(v) \coloneqq \max_{t \in I} d_{G_t}(v)$ as the maximum degree of $v$ over all possible time steps. We define the \emph{average temporal maximum degree} of $G$ as $\sum_{v \in V(G)} \dmax(v) / n$. Our main result is the following.

\begin{theorem}
    \label{thm:main}
    For any always-connected temporal graph $G=(G_t)_{t \in \NN}$ with $n$ vertices and average temporal maximum degree $D$, there exists a temporal exploration of $G$ spanning at most $\O(n^{3/2} \sqrt{D \log n})$ time steps.
\end{theorem}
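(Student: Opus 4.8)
The plan is to devise an exploration strategy that, from the current vertex, repeatedly moves toward the nearest not-yet-visited vertex, using always-connectivity to guarantee such moves exist. Let me think through how one would make this achieve the claimed bound.

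The plan is to run a greedy exploration that always heads for a currently-unvisited vertex that is \emph{reachable soonest}, and to pay for this using a charging argument against the sparsity implied by $D$. The one primitive I would take for granted is the standard reachable-set fact for always-connected graphs: if $R_s$ denotes the set of vertices reachable from the current position by a journey using at most $s$ further time steps, then $R_0=\{v\}$ and, because every snapshot is connected, $|R_{s+1}|\ge |R_s|+1$ whenever $R_s\ne V$; hence any vertex (in particular any unvisited one) is reachable within at most $n$ time steps. Building $R_s$ also produces a \emph{journey tree}: when a vertex $w$ first enters the reachable set across snapshot $G_t$ via an edge $xw$, record $xw$ as a tree edge. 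This tree spans $R_s$, uses $|R_s|-1$ edges of the underlying graph $H=(V,\bigcup_t E(G_t))$, and records an actual journey to each of its vertices.

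Fix a threshold $L$ (to be optimised). At each step, with $U$ the set of unvisited vertices, let $\rho$ be the least number of time steps after which the reachable set first meets $U$. I would split into two cases. In the \emph{cheap} case $\rho\le L$: travel to the newly reached vertex of $U$ along its journey in the tree, at cost at most $L$, gaining at least one new vertex. Since at most $n$ vertices are ever newly visited, the total cost incurred in cheap steps is at most $nL$. In the \emph{expensive} case $\rho>L$: the reachable set $R_L$ has not yet met $U$, so it consists entirely of already-visited vertices and has size at least $L+1$; its journey tree is a connected subgraph of $H$ on these vertices using at least $L$ edges. I then spend at most $n$ further steps to reach some vertex of $U$ (by the reachable-set fact), and I \emph{charge} this expensive event to the $\ge L$ underlying edges of that all-visited journey tree.

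The crux is to bound the number $F$ of expensive events. Each is charged to a connected, fully-visited edge set of size $\ge L$ inside $H$, and $H$ has only $\O(Dn)$ edges since $\sum_{v}\dmax(v)=Dn$. If the charged edge sets were pairwise (almost) disjoint we would get $F\cdot L=\O(Dn)$, hence total cost $nL+nF=nL+\O(Dn^2/L)$, which is $\O(n^{3/2}\sqrt{D})$ at $L=\sqrt{Dn}$. The difficulty is that successive all-visited balls can overlap heavily, so the naive charge double-counts. To control this I would randomise the radius: instead of the fixed cutoff $L$, pick the cutoff from a random scale among $\O(\log n)$ dyadic values in a window around $L$ (a region-growing argument in the style used for sparsest-cut and metric embeddings). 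A random scale guarantees, in expectation, that the annulus \emph{newly} swept by each expensive event carries $\Omega(L)$ edges that are charged only $\O(\log n)$ times in total, turning the heuristic disjointness into a genuine bound $F=\O(Dn\log n/L)$. Balancing $nL$ against $\O(Dn^2\log n/L)$ then gives $L=\sqrt{Dn\log n}$ and the claimed $\O(n^{3/2}\sqrt{D\log n})$.

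The main obstacle, and the part I expect to require the most care, is exactly this overlap control. Two things conspire against a clean argument: the single agent only \emph{visits} a journey and cannot freely sweep a whole tree, because each recorded edge is available at a single time step; and we are given only the \emph{average} temporal maximum degree, so a few very high-degree vertices may be traversed repeatedly and must be prevented from absorbing all the charge. I would handle the former by charging to edges of the journey tree (which are genuinely used and lie in $H$) rather than to visits, and the latter by the random-scale region-growing step together with a Markov-type split of $V$ into low- and high-$\dmax$ vertices, so that the $\O(Dn)$ edge budget is spent where the sparsity actually holds. Verifying that the random scale yields the stated $\O(\log n)$ overlap uniformly over all expensive events, via a union bound, is where the $\sqrt{\log n}$ in the final bound is paid.
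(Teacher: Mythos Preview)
Your charging argument rests on the claim that the underlying graph $H=(V,\bigcup_t E(G_t))$ has $\O(Dn)$ edges ``since $\sum_v \dmax(v)=Dn$''. This inference is false: $\dmax(v)=\max_t d_{G_t}(v)$ bounds the degree of $v$ in each snapshot, not its degree in the union. The two quantities can differ by a factor of $n$. For a concrete example, let every $G_t$ be a Hamiltonian path, choosing the paths so that their union is $K_n$; then $\dmax(v)\le 2$ for every $v$ and hence $D\le 2$, yet $H=K_n$ has $\binom{n}{2}$ edges. So even in the most basic special case (snapshots of bounded degree) your edge budget $\O(Dn)$ is simply not available, and the balance $nL+\O(Dn^2\log n/L)$ collapses. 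No amount of random-scale region growing can repair this, because the pool of edges you are charging into is the wrong size.

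Even setting this aside, the overlap-control step is not a known argument in this setting. Region growing relies on nested balls in a fixed metric; here each expensive event produces a journey tree rooted at a different vertex and, crucially, at a different starting time, so the reachability sets from different events are not nested and need not share any monotone structure. Your sketch acknowledges this as the ``main obstacle'' but does not give a mechanism; the analogy to sparsest-cut rounding does not transfer.

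For contrast, the paper never touches $|E(H)|$. Its key lemma (\cref{lemma:no_big_stable_set}) runs a double count on \emph{temporal} incidences: each vertex $w$ can be ``recorded'' by at most $d_{G_t}(w)\le \dmax(w)$ vertices of $X$ at step $t$, and at most twice by any single vertex overall, so the total number of records is at most $2\sum_w \dmax(w)=2Dn$. That is where $D$ genuinely enters, via per-snapshot degrees rather than the density of the union. This yields: within $\O(Dn/|X|)$ steps some two vertices of $X$ are joined by a temporal walk. The exploration is then built from a dominating-set refinement (\cref{lemma:dominating_set}) and an interval decomposition (\cref{lemma:exploring_X}), not from a greedy nearest-unvisited scheme.
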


The average temporal maximum degree encompasses many classes of temporal graphs investigated previously.
For example, if the maximum degree of every snapshot $G_t$ is at most $d$, then the average temporal maximum degree of $G$ is also at most $d$, and so we obtain the following corollary.

\begin{corollary}
    For any $d \in \NN$ and any always-connected temporal graph $G=(G_t)_{t \in \NN}$ where $G_t$ has maximum degree at most $d$ for all $t \in \NN$, there exists a temporal exploration of $G$ spanning $\O(n^{3/2} \sqrt{d \log n})$ time steps.
\end{corollary}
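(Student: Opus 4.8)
The plan is to obtain the corollary as an immediate specialization of \Cref{thm:main}: the only thing to establish is that the uniform per-snapshot degree bound $d$ dominates the average temporal maximum degree $D$ of $G$. First I would verify the degree inequality pointwise. Fix a vertex $v \in V(G)$. The hypothesis says every snapshot $G_t$ has maximum degree at most $d$, so in particular $d_{G_t}(v) \le d$ for every time step $t \in \NN$; taking the maximum over all $t$ preserves this inequality and gives $\dmax(v) = \max_{t} d_{G_t}(v) \le d$.

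Averaging this bound over the $n$ vertices then yields
\[
D = \frac{1}{n}\sum_{v \in V(G)} \dmax(v) \;\le\; \frac{1}{n}\sum_{v \in V(G)} d \;=\; d.
\]
Since $G$ is always-connected, \Cref{thm:main} applies and produces a temporal exploration of length $\O(n^{3/2}\sqrt{D\log n})$. As this bound is monotone nondecreasing in $D$, substituting $D \le d$ gives an exploration spanning $\O(n^{3/2}\sqrt{d\log n})$ time steps, which is exactly the claimed bound.

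I do not expect any genuine obstacle here: all of the work is carried by \Cref{thm:main}, and the corollary is a clean instantiation. The single point that merits care is conceptual rather than technical, namely that $D$ is an average of the \emph{per-vertex} maxima $\dmax(v)$, not a maximum of per-snapshot quantities, so one must confirm that the pointwise bound $d_{G_t}(v)\le d$ survives the $\max_t$ operation before averaging. This is immediate, but it is precisely what makes \Cref{thm:main} strictly stronger than the corollary, since $D$ can be far smaller than $d$ whenever only a few vertices ever attain large degree.
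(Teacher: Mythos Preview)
Your proof is correct and matches the paper's approach exactly: the paper also derives the corollary simply by noting that if every snapshot has maximum degree at most $d$ then the average temporal maximum degree $D$ is at most $d$, and then invoking \Cref{thm:main}. Your write-up just spells this observation out in a bit more detail.
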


This significantly improves the previous best bound of $\mathcal{O}_d(n^{7/4})$ by Erlebach, Kammer, Luo, Sajenko, and Spooner~\cite{ErlebachKammerLuoSajenkoSpooner19}.
Moreover, in any always-connected temporal graph, the square of each snapshot $G_t$ has a spanning tree of maximum degree at most three \cite[Lemma 4.1]{ErlebachKammerLuoSajenkoSpooner19}, and so the corollary above also implies that $\mathcal{O}(n^{3/2} \sqrt{\log n})$ time steps suffice if the exploration is allowed to make two moves per time step.
For this problem, $\mathcal{O}(n^{7/4})$ was the previous best bound as well~\cite{ErlebachKammerLuoSajenkoSpooner19}.

Moreover, observe that if the underlying graph of $G$ has average degree at most $k$, then the average temporal maximum degree of $G$ is at most $k$, and so we also obtain the following corollary.

\begin{corollary}
    For any $k \in \NN$ and any always-connected temporal graph $G=(G_t)_{t \in \NN}$ whose underlying graph has average degree at most $k$,
     there exists a temporal exploration of $G$ spanning at most $\O(n^{3/2} \sqrt{k \log n})$ time steps.
\end{corollary}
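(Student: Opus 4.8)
The plan is to deduce this corollary directly from Theorem~\ref{thm:main}; essentially all of the work is already contained in that theorem, and the only thing left to check is that the hypothesis on the average degree of the underlying graph controls the average temporal maximum degree $D$. So I would not reprove anything about explorations here, but simply feed the right value of $D$ into Theorem~\ref{thm:main}.

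First I would record the pointwise comparison between snapshot degrees and underlying degrees. Write $U = (V, \bigcup_{t \in \NN} E(G_t))$ for the underlying graph and let $d_U(v)$ denote the degree of $v$ in $U$. The key observation is that for every vertex $v$ and every time step $t$, each edge incident to $v$ in the snapshot $G_t$ is in particular an edge of $U$, so the neighbourhood of $v$ in $G_t$ is contained in its neighbourhood in $U$; hence $d_{G_t}(v) \le d_U(v)$. Taking the maximum over all $t$ gives $\dmax(v) \le d_U(v)$ for every vertex $v$. Averaging this inequality over the $n$ vertices yields $D = \tfrac{1}{n}\sum_{v \in V} \dmax(v) \le \tfrac{1}{n}\sum_{v \in V} d_U(v)$, and the right-hand side is exactly the average degree of $U$, which is at most $k$ by hypothesis. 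Thus $D \le k$.

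Finally, since the bound $\O(n^{3/2}\sqrt{D \log n})$ of Theorem~\ref{thm:main} is monotonically increasing in $D$, substituting $D \le k$ produces a temporal exploration of length $\O(n^{3/2}\sqrt{k \log n})$, as required. There is no genuine obstacle here: the statement is an immediate corollary of Theorem~\ref{thm:main}, and the only point worth stating with any care is that the underlying graph is defined as the union of all snapshot edge sets, which is precisely what guarantees $\dmax(v) \le d_U(v)$ for each vertex and hence $D \le k$.
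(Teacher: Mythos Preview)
Your proposal is correct and matches the paper's approach exactly: the paper simply observes that if the underlying graph has average degree at most $k$ then the average temporal maximum degree $D$ is at most $k$, and then invokes Theorem~\ref{thm:main}. Your argument is a slightly more detailed version of this one-line observation, spelling out the pointwise inequality $\dmax(v) \le d_U(v)$ that underlies it.
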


No subquadratic upper bound on the length of a shortest exploration was previously known when the underlying graph has bounded average degree. Using this result as a black box, we also immediately obtain the following consequences for short explorations in temporal graphs whose underlying graph $H$ satisfies certain structural properties.
\begin{itemize}
    \item If $H$ is planar, then there exists an exploration spanning $\O(n^{3/2} \sqrt{\log n})$ time steps. The previous best bound was $\mathcal{O}(n^{7/4}\log n)$~\cite{AdamsonGusevMalyshevZamaraev22}.
    \item If $H$ has treewidth at most $k$, then there exists an exploration spanning $\O(n^{3/2} \sqrt{k \log n})$ time steps. This removes a factor of order $\sqrt{k \log n}$ from the previous best bound $\mathcal{O}(k\, n^{3/2}\log n)$~\cite{AdamsonGusevMalyshevZamaraev22}.
    \item If $H$ is $K_t$-minor-free, then $H$ has average degree $\O(t \sqrt{\log t})$~\cite{kostochka1982minimum,Kostochka1984,Thomason1984}, and so there exists an exploration spanning $\O(t^{1/2} \log^{1/4} t \cdot n^{3/2} \sqrt{\log n})$ time steps. 
    \item If $H$ has at most $o(n^2/\log n)$ edges, then there exists an exploration spanning $o(n^2)$ time steps.
\end{itemize}

Finally, we remark that the proof of \cref{thm:main} can easily be made algorithmic: for all theorems stated above, there is a polynomial-time algorithm which, given a temporal graph, constructs an exploration satisfying the stated bounds.

\section{Proof of Theorem~\ref{thm:main}}

We write $[\ell,r] \coloneqq \{\ell,\ldots,r\}$ and $[r] \coloneqq [1,r]$. All logarithms in this paper have base $2$.

Let $G = (G_t)_{t \in \NN}$ be a temporal graph.
We will always use $n$ to denote the number of vertices of $G$. 
For an interval $I \subseteq \NN$, let $G_I$ denote the temporal subgraph $(G_t)_{t \in I}$.
A \emph{temporal walk} in $G$ is a sequence of vertices $W = (w_\ell, \dots, w_{r+1})$ where $[\ell,r] \subseteq \NN$ and for every $t \in [\ell,r]$ either $w_t = w_{t+1}$ or $w_t w_{t+1} \in E(G_t)$.
We say that $W$ \emph{spans} $r - \ell + 1$ time steps and \emph{covers} the vertices $w_\ell, \dots, w_{r+1}$.
A \emph{temporal exploration} of $G$ is a temporal walk that starts at time step 1 and covers all vertices of $G$.

\medskip

Observe that for any $k$ vertices in a connected graph on $n$ vertices, there is always a path between two of these vertices of length at most $2n/k$. The key step in our argument is the following lemma which proves an analogous result for temporal graphs, quantifying the number of time steps required to guarantee that there is a temporal walk between two distinct vertices in a given subset of vertices.

\begin{lemma}\label{lemma:no_big_stable_set}
    Let $G = (G_t)_{t \in I}$ be an always-connected temporal graph with average temporal maximum degree $D$, and let $X \subseteq V(G)$ be a set of at least two vertices.
    If $\abs{I} \geq 2 D n / \abs{X} + 1$, then there exist two distinct vertices $u,v \in X$ such that there is a temporal walk from $u$ to $v$ in $G$.
\end{lemma}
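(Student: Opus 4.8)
The plan is to argue by contraposition: writing $I=[\ell,r]$ and assuming that no ordered pair of distinct vertices of $X$ is joined by a temporal walk in $G$, I would derive $\abs{X}(\abs{I}-1) < 2Dn$, which is exactly the negation of the hypothesis $\abs{I} \ge 2Dn/\abs{X}+1$. The basic objects are reachability sets. For a vertex $x$ and a time $t$, let $A_t(x)$ be the set of vertices reachable from $x$ by a temporal walk using only the snapshots before $t$, and symmetrically let $Z_t(x)$ be the set of vertices that can reach $x$ using only the snapshots from $t$ onwards. The single fact I would extract from always-connectivity is a temporal analogue of the elementary observation that a breadth-first ball of radius $\rho$ in a connected graph has at least $\rho+1$ vertices: as long as $A_t(x) \ne V(G)$, connectivity of $G_t$ forces an edge leaving $A_t(x)$, so $A_{t+1}(x) \supsetneq A_t(x)$. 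Hence each reachability set strictly grows by at least one vertex per time step until it saturates, and the same holds for the backward sets $Z_t(x)$.

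The first real issue is that temporal walks are directed in time, so --- unlike in the static setting quoted just before the lemma --- it is not enough for two forward balls $A(u)$ and $A(v)$ to intersect; an intersection of two forward reachability sets does not by itself produce a walk from $u$ to $v$. To recover a genuine walk I would split $I$ at a midpoint $t^\ast$ and look for a meeting vertex: if $w \in A_{t^\ast}(u) \cap Z_{t^\ast}(v)$ for distinct $u,v \in X$, then concatenating the walk $u \to w$ (ending at time $t^\ast$) with the walk $w \to v$ (starting at time $t^\ast$) yields the desired temporal walk from $u$ to $v$. Thus, under the contrapositive assumption, $A_{t^\ast}(u) \cap Z_{t^\ast}(v) = \emptyset$ for all distinct $u,v \in X$, while $x \in A_{t^\ast}(x) \cap Z_{t^\ast}(x)$ for every $x$. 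Recording for each vertex $w$ the set of sources whose forward ball contains $w$ and the set of sources whose backward ball contains $w$, this disjointness says exactly that these two source-sets are \emph{co-singletons}: if both are nonempty they must be equal and of size one. In particular, this already rules out any two sources being forward- or backward-reachable from one another.

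The hard part will be turning this structural picture into the quantitative bound with the correct constant, and this is precisely where the average temporal maximum degree $D$ must enter. The growth fact gives $\abs{A_{t^\ast}(x)} \ge (t^\ast-\ell)+1$ and $\abs{Z_{t^\ast}(x)} \ge (r-t^\ast)+1$, so summing over $x \in X$ produces total mass of order $\abs{X}\cdot\abs{I}$ spread over only $n$ vertices. A pure volume/disjointness count is doomed, however: the forward balls may share a common ``core'' of vertices without ever connecting two sources, so their union can stay small and no contradiction follows from $n$ alone. This core-sharing is the main obstacle, and it is also the reason the threshold is $2Dn/\abs{X}$ rather than $2n/\abs{X}$. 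The device I would use to beat it is to charge growth to \emph{degrees} rather than to vertices: whenever a reachability set absorbs a new vertex $w$ at time $t$ it does so across an edge of $G_t$ incident to $w$, and I would account for these absorptions against $\sum_{v} \dmax(v) = Dn$, using a first-arrival (foremost) orientation to make the charging essentially injective and the co-singleton structure to bound how many distinct sources one vertex can serve. If each of the $\abs{X}$ sources is forced to accrue growth proportional to $\abs{I}$ against a total degree budget of $2Dn$, one obtains $\abs{X}(\abs{I}-1) < 2Dn$, completing the contrapositive. The delicate point --- the step I expect to need the most care --- is ensuring that a single vertex of large $\dmax$ cannot absorb the growth of many sources for free, which is exactly what the separation between forward and backward reachability is there to prevent.
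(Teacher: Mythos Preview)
Your high-level picture is right—forward and backward reachability sets, concatenation at a meeting vertex, growth by at least one per step, and a degree budget of $\sum_w \dmax(w)=Dn$ are exactly the ingredients—but the single-midpoint split does not carry the charging, and this is where the proposal has a genuine gap. The co-singleton observation at $t^\ast$ only says that a vertex lying in at least two forward balls lies in no backward ball; it does \emph{not} bound how many sources can forward-reach a given vertex. Concretely, many sources $x$ can absorb the same vertex $w$ at the same step $t$ through the very same edge of $G_t$, so a ``foremost'' orientation is injective only within a single source's tree, not across sources, and nothing limits the contribution of $w$ to $O(\dmax(w))$. With only a midpoint, the Type-B vertices (those in many $A_{t^\ast}(x)$ and no $Z_{t^\ast}(x)$) can soak up essentially all the forward growth with no degree penalty, so the inequality $\abs{X}(\abs{I}-1)<2Dn$ does not follow.

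The paper replaces the single midpoint by a ``sliding'' one: for each source $u$ and \emph{every} time $t$ it records one boundary vertex $w_{u,t}\in N_{G_t}\bigl(F(u,t-1)\cap B(u,t+1)\bigr)\setminus\bigl(F(u,t-1)\cap B(u,t+1)\bigr)$, using the intersection of the forward set up to $t-1$ and the backward set from $t+1$. Two facts then do the work that your charging cannot. First, for fixed $u$, a vertex $w$ is recorded at most twice: if $w=w_{u,t_1}=w_{u,t_3}$ with $t_1<t_2<t_3$ then $w\in F(u,t_1)\cap B(u,t_3)\subseteq F(u,t_2-1)\cap B(u,t_2+1)$, contradicting $w=w_{u,t_2}$. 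Second, because $w_{u,t}\in F(u,t)\cap B(u,t)$, any two sources recording the same $w$ at different times already concatenate through $w$; and if more than $d_{G_t}(w)$ sources record $w$ at the same time $t$, two of them share a $G_t$-neighbour of $w$ lying in both their $F\cap B$ sets, so they concatenate through that neighbour. Pigeonhole on $\abs{X}\cdot\abs{I}>2Dn=2\sum_w\dmax(w)$ then finishes. The device you are missing is precisely this intersection $F(u,t-1)\cap B(u,t+1)$ taken at every step: it simultaneously yields the ``at most twice per source'' bound and makes the degree of $w$ in a \emph{single} snapshot the cap on how many sources can record $w$ there.
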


We remark that this lemma is tight up to the constant factor for $\abs{X} \ge D$ (see \cref{lemma:n-1_steps_to_go_anywhere} for $\abs{X} < D$). To illustrate that, consider the temporal graph $G = (G_t)_{t \in I}$ where $I \coloneqq [\ell]$ with $\ell \ge D$ and for $t\in [\ell]$, the graph $G_t$ consists of an $\ell \times m$ grid with $D - 4$ leaves attached to each vertex in row $t$, where each snapshot $G_t$ uses the same set $X$ of leaves (see \cref{fig:gridconstruction}).
Then, it can be shown inductively that for any $t \in [\ell]$, a temporal walk in $G_{[t]}$ that starts at some vertex $x \in X$ can only remain at $x$ or must end at a grid vertex in one of the first $t$ rows. 
This implies that there exists no temporal walk in $G$ between two distinct vertices of $X$. However,  $n = \Theta((\ell + D) m)$ and $\abs{X} = \Theta(D m)$, and therefore $D n / \abs{X} = \Theta(\ell + D) = \Theta(\abs{I})$. 

\begin{figure}[h]
    \centering
    \includegraphics{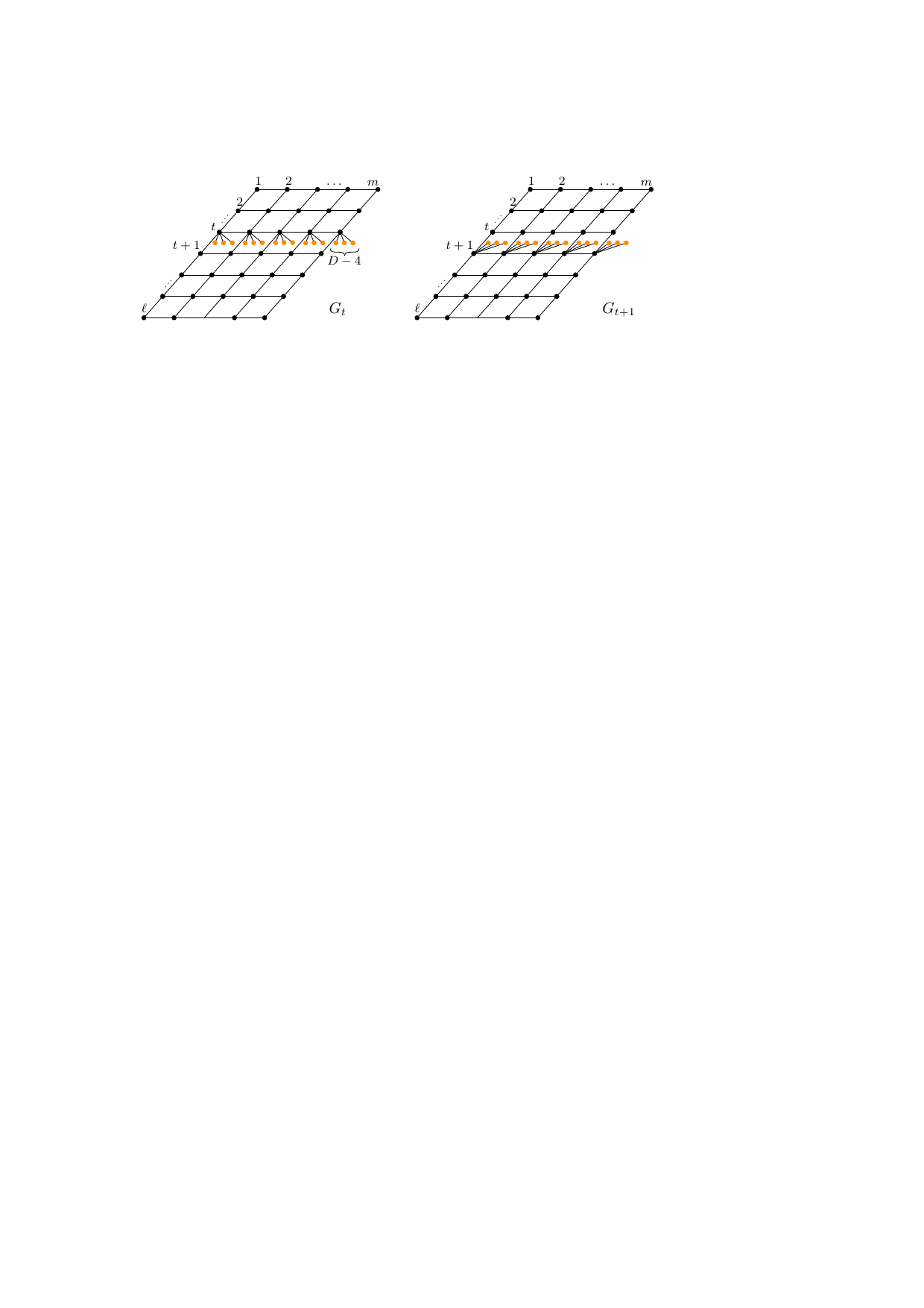}
    \caption{The figure depicts two consecutive snapshots from the construction that shows that \cref{lemma:no_big_stable_set} is tight. The orange vertices represent the set $X$.}
    \label{fig:gridconstruction}
\end{figure}

\begin{proof}[Proof of Lemma~\ref{lemma:no_big_stable_set}]
    Let $I = [\ell,r]$.
    For all $u \in X$ and $t \in I$, define
    \begin{align*}
        F(u,t) & \coloneqq \{ w \in V : \text{there is a temporal walk from } u \text{ to } w \text{ in } G_{[\ell,t]}\} \text{ and} \\
        B(u,t) & \coloneqq \{ w \in V : \text{there is a temporal walk from } w \text{ to } u \text{ in } G_{[t,r]}\}.
    \end{align*}
    Also define $F(u,\ell-1) = B(u,r+1) \coloneqq \{u\}$.
    For all $u \in X$ and $t \in I$ we have $u \in F(u,t-1) \cap B(u,t+1)$. Since we are done if $F(u,t-1) \cap B(u,t+1) = V(G)$ for some $t\in I$, suppose that $F(u,t-1) \cap B(u,t+1) \neq V(G)$ for all $t\in I$.
    
    Note that for any set $S$ which is not the empty set or $V(G)$, there is an edge between $S$ and $V(G) \setminus S$ in $G_t$ because $G_t$ is connected.
    So, there exists a vertex $w_{u,t} \in N_{G_t}(F(u,t-1) \cap B(u,t+1))$.
    Fix such a vertex and say that $u$ \emph{records} $w_{u,t}$ at time step $t$. By definition, we have
    \[
        w_{u,t} \notin F(u,t-1) \cap B(u,t+1)
    \]
    and
    \[
        w_{u,t} \in N_{G_t}(F(u,t-1) \cap B(u,t+1)) \subseteq F(u,t) \cap B(u,t),
    \]
    where the inclusion holds because $F(u,t-1)\cup N_{G_t}(F(u,t-1))= F(u,t)$ and, similarly, $B(u,t+1)\cup N_{G_t}(B(u,t+1))= B(u,t)$ (see \cref{fig:independent_set_figure}).

    \begin{figure}
        \centering
        \includegraphics[page=2]{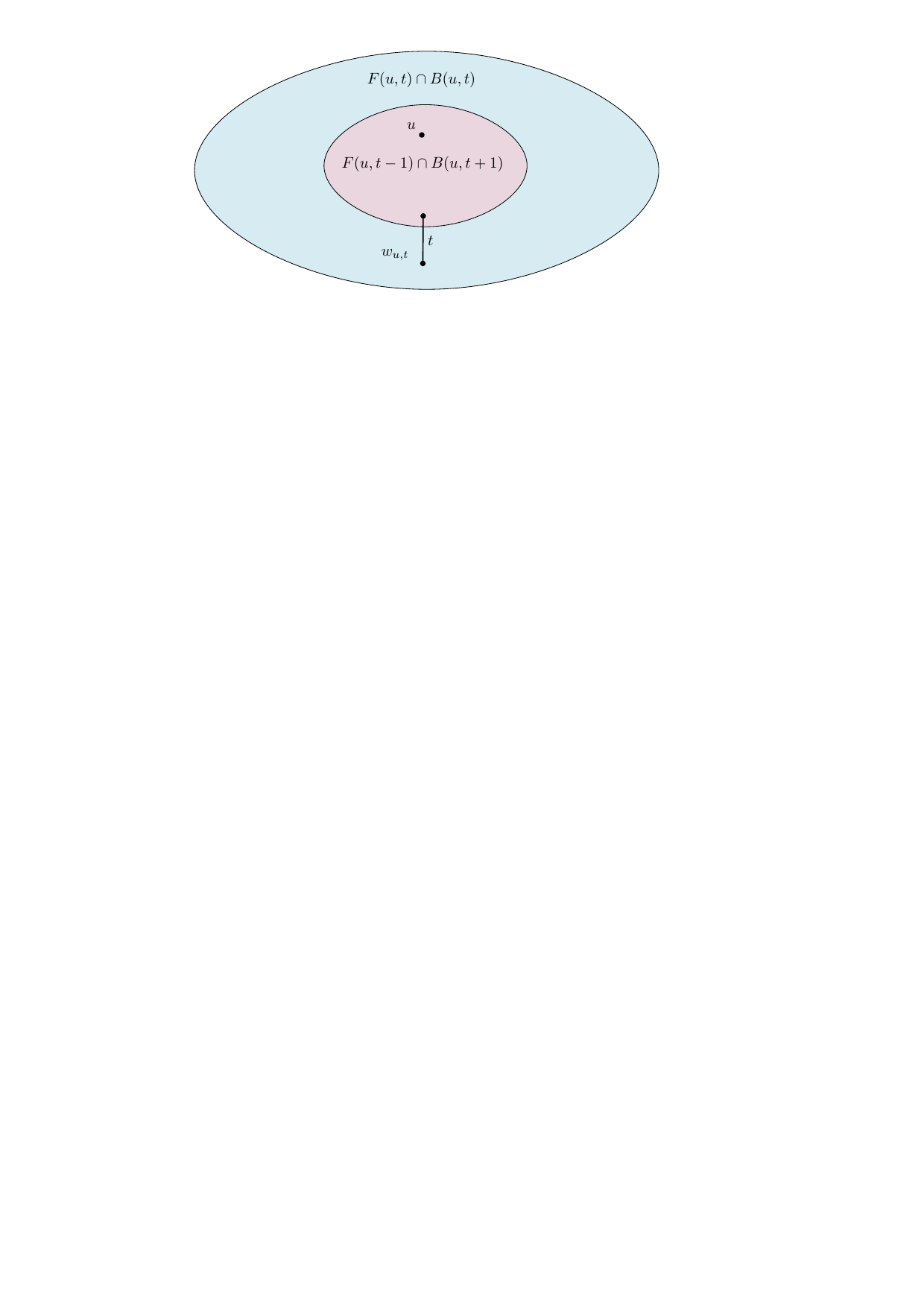}
        \caption{The figure depicts the situation described in the proof of \cref{lemma:dominating_set}. The fact that $G_t$ is connected implies the existence of a vertex $w_{u,t}$ outside of but adjacent to $F(u,t-1) \cap B(u,t+1)$. By construction, $w_{u,t}$ will be in $F(u,t)\cap B(u,t)$. Crucially, $d_{G_t}(w_{u,t}) \leq d_{\max}(w_{u,t})$. Therefore, $w_{u,t}$ is \emph{recorded} by at most $d_{\max}(w_{u,t})$ vertices at time step $t$. We stress that even though some sets are represented to be disjoint for clarity, this is not always the case.}
        \label{fig:independent_set_figure}
    \end{figure}
    
    We claim that for any $w \in V(G)$ and $u \in X$, there are at most two time steps $t \in I$ such that $w_{u,t} = w$. Indeed, if there were three such time steps $t_1 < t_2 < t_3$, then as $w$ is recorded at time steps $t_1$ and $t_3$, we get that
    \[
        w \in F(u,t_1) \cap B(u,t_3) \subseteq F(u,t_2-1) \cap B(u,t_2+1).
    \]
    However, since $w = w_{u,t_2}$, we should also have $w \notin F(u,t_2-1) \cap B(u,t_2+1)$, a contradiction.

    Since every vertex in $X$ records one vertex at each of $\abs{I}$ different time steps, the total number of records is $\abs{X} \cdot \abs{I} > 2 D n = 2 \sum_{w \in V(G)} \dmax(w)$. In particular, there exists a vertex $w \in V(G)$ that is recorded more than $2 \cdot \dmax(w)$ times. This implies that $w$ must have been recorded by more than $\dmax(w)$ distinct vertices from $X$ since each vertex from $X$ records it at most twice.

    If all of these vertices record $w$ at the same time step $t \in I$, then as there are more than $\dmax(w) \ge d_{G_t}(w)$ many of these vertices, there must exist two distinct vertices $u, v \in X$ and a neighbour $x \in N_{G_t}(w)$ such that $x \in F(u,t-1) \cap B(u,t+1)$ and $x \in F(v,t-1) \cap B(v,t+1)$. But this implies that there is a temporal walk from $u$ to $x$ in $G_{[\ell,t-1]}$ and a temporal walk from $x$ to $v$ in $G_{[t+1,r]}$, and so by waiting at $x$ for one time step, the concatenation of these walks yields a temporal walk from $u$ to $v$ in $G$, as required.
    
    Otherwise, it follows that there exist two distinct vertices $u, v \in V(G)$ and two time steps $t_1 < t_2$ such that $w = w_{u,t_1} = w_{v,t_2}$. But then there is a temporal walk from $u$ to $w$ in $G_{[\ell,t_1]}$ and a temporal walk from $w$ to $v$ in $G_{[t_2,r]}$, and so by waiting at $w$ for $t_2 - t_1 - 1$ time steps, the concatenation of these walks yields a temporal walk from $u$ to $v$ in $G$, as required.
\end{proof}

When constructing a temporal exploration, we can apply the lemma above to the set $X$ of unexplored vertices, but also to any subset of $X$ of a fixed size. This provides us with several walks between unexplored vertices. Then, we could try to construct a temporal exploration by repeating this on multiple disjoint time intervals and linking the obtained walks across different time intervals. Unfortunately, it quickly becomes difficult to link these walks in such a way that they contain many \emph{distinct} unexplored vertices.

Instead, we use the preceding lemma to find a small subset of $X$ such that any other vertex in $X$ can be reached by a temporal walk starting from some vertex in this small subset. In later time intervals, we can then simply avoid this small subset entirely to ensure that we only visit distinct unexplored vertices.

\begin{lemma}\label{lemma:dominating_set}
    Let $G = (G_t)_{t \in I}$ be an always-connected temporal graph with average temporal maximum degree $D$, let $X \subseteq V(G)$ be a set of at least two vertices, and let $k \ge 2$ be an integer.
    If $\abs{I} \ge 2 D n / k + 1$, then there exists a set of vertices $S \subseteq X$ with
    \[
        \abs{S} \le \frac{\log \abs{X}}{-\log(1 - 1/ (2k))} \le 2k \log \abs{X}
    \]
    such that for all $u \in X$ there is a vertex $v \in S$ with a temporal walk from $v$ to $u$ in $G$.
\end{lemma}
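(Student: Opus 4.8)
I want to build the dominating set $S$ greedily, using Lemma~\ref{lemma:no_big_stable_set} as the engine. The key observation is that Lemma~\ref{lemma:no_big_stable_set} guarantees a temporal walk between *two* distinct vertices of any subset $Y \subseteq X$ as soon as the time horizon satisfies $|I| \ge 2Dn/|Y| + 1$. So if I apply the lemma to the *whole* set $X$, I get a single temporal walk $u \to v$ with $u,v \in X$; this lets me "charge" $v$ to $u$ (or vice versa), i.e.\ one vertex of $X$ reaches another. The trouble is that one application only pairs up two vertices, whereas I need a small set $S$ that reaches *all* of $X$. The natural fix is to iterate: each round should reach a constant fraction of the surviving vertices, so that after $O(\log|X|)$ rounds everything is covered.

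**Making each round efficient.** Let me think about what a single round should accomplish on a set $Y$ of currently-unreached vertices. I would like to find, within the same time interval $I$, many *vertex-disjoint* temporal walks, each connecting two vertices of $Y$, so that a constant fraction of $Y$ is "absorbed" as the endpoint of a walk whose other endpoint stays in $Y$. The cleanest way to extract this from Lemma~\ref{lemma:no_big_stable_set} is a maximality/extremality argument: consider the relation "there is a temporal walk from $a$ to $b$ in $G_I$" on $Y$, and let me look at the digraph it induces. If $|I| \ge 2Dn/k + 1$ and I want to reach all but a small core, the point is that *no* subset $Y' \subseteq Y$ of size $|Y'| \ge k$ can be an independent set in this reachability relation — because the lemma, applied to $Y'$ with $|I| \ge 2Dn/|Y'| + 1 \ge 2Dn/k+1$, forces a temporal walk between two of its vertices. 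Hence every subset of size at least $k$ contains a reachable pair.

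**The probabilistic/extremal core.** This "no large independent set in the reachability relation" is exactly the structure I want to exploit, and it is what produces the $\log|X|$ factor and the $\frac{1}{-\log(1-1/(2k))}$ form of the bound. The plan is to choose $S$ by a random sampling or greedy peeling argument: repeatedly pick a vertex, mark everything it can reach, and remove those. Because every set of $k$ survivors contains a reachable pair, a carefully chosen vertex reaches at least a $1/(2k)$-fraction of the remaining set in expectation; equivalently, each vertex of $Y$ survives a round with probability at most $1 - 1/(2k)$. Iterating, after $m$ rounds the expected number of survivors is at most $|X|(1-1/(2k))^m$, which drops below $1$ once $m \ge \log|X| / (-\log(1-1/(2k)))$. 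Then $S$ is the set of $m$ chosen "centers," giving the stated size bound; the final inequality $\frac{\log|X|}{-\log(1-1/(2k))} \le 2k\log|X|$ is just the elementary estimate $-\log(1-x) \ge x$ applied with $x = 1/(2k)$.

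**The main obstacle.** The delicate point is turning "every $k$-subset contains a reachable pair" into a quantitative *domination* step that removes a $1/(2k)$-fraction per round, and doing so while keeping all the walks inside the *single* interval $I$ so that concatenation/waiting remains valid and the walks genuinely reach distinct vertices. I expect the cleanest route is to set up the reachability relation on $Y$, observe via Lemma~\ref{lemma:no_big_stable_set} that its independence number is at most $k-1$, and then invoke a Turán-type or fractional-covering argument guaranteeing a vertex whose out-neighbourhood (reachable set) captures at least a $1/k$-fraction of $Y$ — the factor $2k$ rather than $k$ absorbing the asymmetry between "reaches" and "is reached by" in the directed relation. Verifying that this domination bound is exactly $1/(2k)$, and that the centers can be assembled into the final set $S$ with the claimed reachability property, is the part I would write most carefully.
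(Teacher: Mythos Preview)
Your proposal is correct and follows essentially the same route as the paper: bound the independence number of the temporal reachability relation on $X$ by $k-1$ via \cref{lemma:no_big_stable_set}, then greedily extract a vertex whose forward-reachable set covers a $1/(2k)$-fraction and recurse. For the step you flagged as needing care, the paper makes it concrete by greedily peeling $v_1,v_2,\dots$ (each chosen with $|F(v_i)\cap X_{i-1}|\ge |B(v_i)\cap X_{i-1}|$, removing $F(v_i)\cup B(v_i)\cup\{v_i\}$ at each step) so that the $v_i$ form an independent set of size $<k$, whence pigeonhole plus the balanced choice yields $|F(v_i)\cap X|\ge |X|/(2k)-1$ for some $i$.
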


\begin{proof}
    We prove the statement by induction on $\abs{X}$.
    If $\abs{X} \le 2k$, then the statement holds for $S \coloneqq X$.
    Otherwise, for any $v \in X$ define
    \begin{align*}
        F(v) & \coloneqq \{ u \in X \setminus \{v\} : \text{there is a temporal walk from } v \text{ to } u \text{ in } G\} \text{ and} \\
        B(v) & \coloneqq \{ u \in X \setminus \{v\} : \text{there is a temporal walk from } u \text{ to } v \text{ in } G\}.
    \end{align*}
    Let $X_0 \coloneqq X$. Then, for $i \ge 1$, if $X_{i-1} \neq \emptyset$, we can select a vertex $v_i \in X_{i-1}$ such that $\abs{F(v_i) \cap X_{i-1}} \ge \abs{B(v_i) \cap X_{i-1}}$ and define $X_i \coloneqq X_{i-1} \setminus (F(v_i) \cup B(v_i) \cup \{v_i\})$. Let $\ell$ be minimal such that $X_\ell = \emptyset$. Note that for any $i \neq j$, there is no temporal walk from $v_i$ to $v_j$ in $G$, and so by \cref{lemma:no_big_stable_set} we get $\ell < k$. This implies that there exists some $i$ such that $\abs{(F(v_i) \cup B(v_i) \cup \{v_i\}) \cap X_{i-1}} \ge \abs{X} / (k-1)$ and so
    \[
        \abs{F(v_i) \cap X} \ge \abs{F(v_i) \cap X_{i-1}} \ge \frac{\abs{(F(v_i) \cup B(v_i)) \cap X_{i-1}}}{2} \ge \frac{\abs{X}}{2 (k-1)} - \frac{1}{2} \ge \frac{\abs{X}}{2k} - 1.
    \]

    In particular, the set $X' \coloneqq X \setminus (F(v_i) \cup \{v_i\})$ has size at most $(1 - 1/(2k)) \cdot \abs{X}$.
    By applying the induction hypothesis to $X'$, we obtain a set of vertices $S' \subseteq X'$ with
    \[
        \abs{S'} \le \frac{\log \abs{X'}}{-\log(1 - 1/ (2k))} \le \frac{\log((1-1/(2k)) \cdot \abs{X})}{-\log(1 - 1/ (2k))} = \frac{\log \abs{X}}{-\log(1 - 1/ (2k))} - 1
    \]
    such that for all $u \in X'$ there is a vertex $w \in S'$ with a temporal walk from $w$ to $u$ in $G$.
    Therefore, $S \coloneqq S' \cup \{v_i\}$ satisfies the statement of the lemma.
\end{proof}

We now construct a temporal walk that covers many vertices of $X$ in $n$ time steps. We do this by dividing the $n$ time steps into multiple time intervals. In each time interval, we use the preceding lemma to find a small subset $S$ of $X$ that can reach all other vertices of $X$ within this time interval, and we remove all vertices of $S$ from $X$. We then show that these sets allow us to construct a temporal walk that visits exactly one vertex in each time interval from the small subset of vertices that we removed during that time interval. This way, all explored vertices will be distinct.

\begin{lemma}\label{lemma:exploring_X}
    Let $G = (G_t)_{t \in I}$ be an always-connected temporal graph with average temporal maximum degree $D$, and let $X \subseteq V(G)$ be a set of at least two vertices. If $\abs{I} \ge n$, then there exists a temporal walk in $G$ that covers at least $\frac18\sqrt{\abs{X}/(D \cdot \log \abs{X})}$ vertices of $X$.
\end{lemma}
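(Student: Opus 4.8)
The plan is to build a single temporal walk that passes through one distinguished vertex in each of $q$ consecutive time intervals, where these distinguished vertices are pairwise distinct and all lie in $X$; the final bound then follows by optimising $q$. Write $m \coloneqq \abs{X}$, and note that $D \ge 1$ since every vertex has degree at least one in each connected snapshot. If $\frac18\sqrt{m/(D\log m)} \le 1$ the statement is immediate, as any walk beginning in $X$ covers a vertex of $X$, so assume otherwise. Fix an integer $k \ge 2$ to be chosen later, set $L \coloneqq \lceil 2Dn/k\rceil + 1$, and let $q \coloneqq \min\{\lfloor \abs{I}/L\rfloor,\ \lfloor m/(4k\log m)\rfloor\}$. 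Partition a prefix of $I$ into $q$ consecutive intervals $I_1 < \cdots < I_q$, each of length $L$. Setting $X_1 \coloneqq X$, we process the intervals in order: having defined $X_j$, apply \cref{lemma:dominating_set} to the temporal graph $G_{I_j}$ and the set $X_j$ (legal since $\abs{I_j}=L \ge 2Dn/k+1$ and, by the cap $q \le m/(4k\log m)$, we keep $\abs{X_j}\ge m/2 \ge 2$) to obtain $S_j \subseteq X_j$ with $\abs{S_j} \le 2k\log\abs{X_j} \le 2k\log m$ such that every vertex of $X_j$ is reached by a temporal walk in $G_{I_j}$ from some vertex of $S_j$; then set $X_{j+1} \coloneqq X_j \setminus S_j$.

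The key structural point is that $S_1, \dots, S_q$ are pairwise disjoint: $S_j \subseteq X_j$ while $X_{j'} \subseteq X_{j+1} = X_j \setminus S_j$ for all $j' > j$. Hence any choice of one vertex per $S_j$ yields $q$ distinct vertices of $X$. The subtlety in stitching them into one walk is that \cref{lemma:dominating_set} only guarantees that \emph{some} vertex of $S_j$ reaches a prescribed target, not that a given one does. I therefore choose the representatives backward in time: pick $v_q \in S_q$ arbitrarily, and for $j = q-1, \dots, 1$ note that $v_{j+1} \in S_{j+1} \subseteq X_{j+1} \subseteq X_j$, so the domination property of $S_j$ on $I_j$ supplies a vertex $v_j \in S_j$ and a temporal walk from $v_j$ to $v_{j+1}$ inside $I_j$. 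Concatenating these — traversing $I_j$ from $v_j$ to $v_{j+1}$ and then waiting at $v_{j+1}$ until $I_{j+1}$ begins — produces a temporal walk starting at $v_1$ that covers the $q$ distinct vertices $v_1, \dots, v_q \in X$.

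It remains to pick $k$ so that $q$ meets the claimed bound. The time budget gives $\lfloor \abs{I}/L\rfloor \ge \lfloor n/L\rfloor$, which is of order $k/(2D)$ since $\abs{I}\ge n$ and $L \approx 2Dn/k$, while the set budget gives $\lfloor m/(4k\log m)\rfloor$, of order $m/(4k\log m)$. Taking $k$ of order $\sqrt{Dm/\log m}$ balances these two quantities, making both of order $\sqrt{m/(D\log m)}$. A routine accounting of the floors, the rounding and the $+1$ in $L$, and the side constraints $k \ge 2$ and $\abs{X_j}\ge 2$ then shows that a valid integer $k$ can be selected for which $q \ge \tfrac18\sqrt{m/(D\log m)}$; the gap between the balanced constant ($\approx 0.35$) and the target constant $\tfrac18$ provides the slack needed to absorb this rounding.

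The step I expect to be the main obstacle is exactly the one-directional nature of the domination in \cref{lemma:dominating_set}: a naive forward construction would require arriving at a \emph{specific} vertex of $S_j$, which the lemma does not control, and the backward selection of $v_q, v_{q-1}, \dots, v_1$ is what resolves this cleanly. The remaining work — fitting the intervals within the time budget while preventing $X$ from being exhausted, and disposing of the small-$\abs{X}$ edge cases — is a routine single-parameter optimisation.
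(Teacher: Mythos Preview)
Your proposal is correct and follows essentially the same approach as the paper: partition the time interval into blocks, apply \cref{lemma:dominating_set} to iteratively peel off small dominating sets $S_j$ from the remaining portion of $X$, and then select representatives backward in time so that each $v_j \in S_j$ has a temporal walk to $v_{j+1}$ within block $I_j$. The only cosmetic differences are that the paper fixes the parameters $m$ (your $q$) and $k$ explicitly from the outset rather than optimising at the end, and it picks the final target $v_{m+1}$ from $X \setminus (S_1 \cup \dots \cup S_m)$ rather than from $S_q$, yielding one extra covered vertex.
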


\begin{proof}
    Let $m \coloneqq \big\lfloor \frac18\sqrt{\abs{X}/(D \cdot \log \abs{X})} \big\rfloor$ and $k \coloneqq \big\lceil\sqrt{D \cdot \abs{X} / \log \abs{X}}\big\rceil$.
    If $m = 0$, then any temporal walk that stays at a vertex in $X$ satisfies the statement, so we may assume that $m \ge 1$.
    We partition $I$ into $m$ intervals $I_1, \dots, I_m$ with at least $\floor{n/m}$ steps each.
    Iteratively for $i = 1, \dots, m$, we apply \cref{lemma:dominating_set} to obtain a set $S_i \subseteq X \setminus (S_1 \cup \dots \cup S_{i-1})$ with $\abs{S_i} \le 2 k \log \abs{X}$ such that for all $u \in X \setminus (S_1 \cup \dots \cup S_{i-1})$ there is a vertex $v \in S_i$ with a temporal walk from $v$ to $u$ in $G_{I_i}$.
    This is possible since for every $i \in [m]$,
    \[
        \abs{I_i} \ge \floor*{\frac{n}{m}} \ge \frac{n}{2 m} \ge 4 n \sqrt{\frac{D \cdot \log \abs{X}}{\abs{X}}} = 4 D n \sqrt{\frac{\log \abs{X}}{D \cdot \abs{X}}} \ge 4 D n / k \ge 2 D n / k + 1.
    \]
    The union of the sets $S_1, \dots, S_m$ has size at most
    \[
        m \cdot 2 k \log \abs{X} \le \frac{1}{8} \sqrt{\frac{\abs{X}}{D \cdot \log \abs{X}}} \cdot 2 \cdot 2 \sqrt{\frac{D \cdot \abs{X}}{\log \abs{X}}} \cdot \log \abs{X} = \frac{\abs{X}}{2} < \abs{X}
    \]
    Therefore, there exists a vertex $v_{m+1} \in X \setminus (S_1 \cup \dots \cup S_m)$.
    Iteratively for $i = m, \dots, 1$, since $S_i$ satisfies \cref{lemma:dominating_set} there is a vertex $v_i \in S_i$ with a temporal walk from $v_i$ to $v_{i+1}$ in $G_{I_i}$. By concatenating these walks, this yields a temporal walk in $G$ that covers at least $m + 1$ vertices of $X$.
\end{proof}

Finally, we repeatedly apply the preceding lemma to find a temporal walk which covers many vertices that have not been covered so far. To ensure that we can reach the start of the next temporal walk from the end of the previous temporal walk, we always leave $n-1$ time steps between these walks as this allows any vertex to reach any other vertex during those time steps. This follows from the following lemma, first shown in \cite{ErlebachHoffmannKammer21}.

\begin{lemma}[\cite{ErlebachHoffmannKammer21}]\label{lemma:n-1_steps_to_go_anywhere}
    Let $G=(G_t)_{t \in I}$ be an always-connected temporal graph, 
    and let $u,v \in V(G)$ be two vertices.
    If $\abs{I} \geq n-1$, then there is a temporal walk from $u$ to $v$ in $G$.
\end{lemma}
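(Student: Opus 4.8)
The plan is to track the set of vertices reachable from $u$ as time advances and to argue that this set gains at least one new vertex at every time step until it exhausts $V(G)$. Write $I = [\ell, r]$, so that $\abs{I} = r - \ell + 1 \ge n - 1$. For each $t \in \{\ell - 1\} \cup I$, I would set $R_t \coloneqq \{ w \in V(G) : \text{there is a temporal walk from } u \text{ to } w \text{ in } G_{[\ell, t]} \}$, with the convention $R_{\ell - 1} \coloneqq \{u\}$ (this is exactly the forward-reachability set $F(u,t)$ appearing in the proof of \cref{lemma:no_big_stable_set}). Since at each time step a temporal walk may either wait or traverse an edge of $G_t$, these sets satisfy $R_t = R_{t-1} \cup N_{G_t}(R_{t-1})$ for every $t \in I$; in particular they are nested, $R_{t-1} \subseteq R_t$.

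The key step is the same cut argument already used above: whenever $R_{t-1} \neq V(G)$, connectivity of $G_t$ forces an edge between $R_{t-1}$ and $V(G) \setminus R_{t-1}$, so $N_{G_t}(R_{t-1})$ contains a vertex outside $R_{t-1}$ and hence $\abs{R_t} \ge \abs{R_{t-1}} + 1$. Combined with $\abs{R_{\ell-1}} = 1$, a straightforward induction on the number of elapsed time steps then gives $\abs{R_t} \ge \min(t - \ell + 2, \, n)$ for all $t \in I$: either the reachable set has already become all of $V(G)$, in which case it stays equal to $V(G)$, or it strictly grows by at least one.

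It then remains to evaluate this at the final time step. Taking $t = r$ yields $\abs{R_r} \ge \min(r - \ell + 2, \, n) = \min(\abs{I} + 1, \, n)$, and the hypothesis $\abs{I} \ge n - 1$ makes the first argument at least $n$, forcing $R_r = V(G)$. In particular $v \in R_r$, which by definition is a temporal walk from $u$ to $v$ in $G$, as required.

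I do not expect any serious obstacle here; the statement is an elementary consequence of always-connectivity, and the main technical input (a connected graph has an edge across every nontrivial cut) is already isolated earlier in the excerpt. The only point demanding care is the off-by-one bookkeeping: one must line up the initialization $\abs{R_{\ell-1}} = 1$, the $\abs{I} = r - \ell + 1$ available time steps, and the extra $+1$ coming from starting at a single vertex, so that exactly $n-1$ steps suffice to reach all $n$ vertices rather than $n$ steps.
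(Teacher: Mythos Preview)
Your argument is correct and is precisely the standard forward-reachability proof of this folklore fact. Note, however, that the paper does not actually give its own proof of \cref{lemma:n-1_steps_to_go_anywhere}: it simply cites \cite{ErlebachHoffmannKammer21} and uses the lemma as a black box. So there is nothing to compare against in the paper itself; your proof is a fine self-contained replacement for the citation, and it meshes cleanly with the surrounding exposition since the sets $R_t$ coincide with the $F(u,t)$ used in the proof of \cref{lemma:no_big_stable_set}.
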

Using this strategy, we now prove our main result.

\begin{proof}[Proof of \Cref{thm:main}]
    Let $G = (G_t)_{t \in \NN}$ be a temporal graph.
    We iteratively construct a temporal walk $W$ of $G$ and a sequence of sets of vertices $V(G) \eqqcolon X_0 \supseteq X_1 \supseteq X_2 \supseteq \dots$ such that every vertex in $V(G) \setminus X_i$ has been covered by $W$ before time step $2 i n$. Suppose we have already constructed $X_i$ and the temporal walk $W$ up to step $2 i n$. 
    If $|X_i| \leq 2$, then by two consecutive applications of \Cref{lemma:n-1_steps_to_go_anywhere}, we can extend $W$ to visit all vertices in $X_i$ before time step $2(i+1)n$, and we set $X_{i+1} = \emptyset$.
    Otherwise, by \Cref{lemma:exploring_X}, there exists a temporal walk $W'$ in $G_{[(2i+1)n, 2(i+1)n-1]}$ that covers at least $\sqrt{\abs{X_i}/(D \cdot \log \abs{X_i})} / 8$ vertices of $X_i$.
    Observe that there exists a temporal walk between every pair of vertices in $G_{[2in+1, (2i+1)n-1]}$ by \Cref{lemma:n-1_steps_to_go_anywhere}. Therefore, we can extend $W$ by a temporal walk in $G_{[2in+1, (2i+1)n-1]}$ to the first vertex of $W'$, and then extend it by $W'$ itself. So, if $X_{i+1}$ denotes the set of vertices not covered before time step $2 (i+1) n$ by this extended temporal walk, then
    \begin{equation}\label{eq:improvement_on_Xi}
        \abs{X_{i+1}} \le \abs{X_i} - \frac{1}{8} \sqrt{\frac{\abs{X_i}}{D \cdot \log \abs{X_i}}}.
    \end{equation}
    
    We claim that there exists some $i_0 = \mathcal{O}(\sqrt{D \cdot n \log n})$ with $X_{i_0} = \emptyset$. Indeed, note that if 
    $\ell$ is an integer with $\ell \ge 4 \sqrt{2 D \cdot \abs{X_i} \log \abs{X_i}}$, then we can deduce from \eqref{eq:improvement_on_Xi} that
    \[
        \abs{X_{i+\ell}} \le \frac{\abs{X_i}}{2}.
    \]
    Therefore, if $\abs{X_i} \le n / 2^k$ for some $k$, then for $\ell_k \coloneqq \big\lceil 4 \sqrt{2 D \cdot (n / 2^k) \log(n / 2^k)} \big\rceil$ we get $\abs{X_{i+\ell_k}} \le n / 2^{k+1}$. This shows that $X_{i_0} = \emptyset$ for
    \[
        i_0 \coloneqq \sum_{k = 0}^{\lceil \log n \rceil} \ell_k \le \sqrt{D \cdot n \log n} \sum_{k = 0}^\infty \frac{8 \sqrt{2}}{2^{k/2}} = \mathcal{O}\left(\sqrt{D \cdot n \log n}\right).
    \]
    So, at time step $2 n i_0 = \mathcal{O}(n^{3/2} \sqrt{D \log n})$, all vertices have been covered.
\end{proof}

\section{Conclusion}

In this paper, we have shown that any always-connected temporal graph $G = (G_t)_{t \in \NN}$ on $n$ vertices admits a temporal exploration spanning at most $\mathcal{O}(n^{3/2}\sqrt{D\log n})$ time steps, where $D \coloneqq \sum_{u\in V(G)}\max_{t} d_{G_t}(u) / n$ denotes the average temporal maximum degree of $G$. Although this significantly reduces the best upper bounds known for various natural classes of temporal graphs, such as when each $G_t$ has bounded maximum degree, when each $G_t$ is a path, or when the underlying graph is planar, the gap to the best lower bounds known remains large, where the best lower bound is $\Omega(n \log n)$ for each of these three examples.
It is unclear to us whether the lower or upper bounds should be correct. There is also a lower bound of $\Omega(dn)$ if $d$ is a bound on the maximum degree of each snapshot $G_t$, but this construction seems difficult to combine with the $\Omega(n\log n)$ lower bound. We therefore raise the following problem.

\begin{problem}
    Is there a function $f=\omega(1)$ such that for all integers $n,D\geq 1$, there is a temporal graph $G=(G_t)_{t \in \NN}$ on $n$ vertices with average temporal maximum degree at most $D$, yet the length of a shortest temporal exploration of $G$ is at least $f(D) \cdot  n \log n$?
\end{problem}
One way to obtain such a lower bound construction would be via snapshots of maximum degree at most $d$, but perhaps this more general setting allows for different approaches.

Another direction for future research is to study whether our techniques can provide further improvements when additional structural constraints are placed on the temporal graph besides bounding its average temporal maximum degree.

\bibliographystyle{bibstyle}
\bibliography{ref}
\end{document}